\documentclass[11pt,a4paper]{article}

\usepackage[T1]{fontenc}
\usepackage[utf8]{inputenc}
\usepackage[margin=2.6cm]{geometry}
\usepackage{amsmath,amssymb,amsthm}
\usepackage{hyperref}
\hypersetup{colorlinks=true,linkcolor=blue,citecolor=blue,urlcolor=blue}

\newtheorem{theorem}{Theorem}
\newtheorem{lemma}[theorem]{Lemma}
\newtheorem{proposition}[theorem]{Proposition}

\theoremstyle{definition}
\newtheorem{definition}[theorem]{Definition}

\newcommand{\C}{\mathrm{C}}
\newcommand{\N}{\mathbb{N}}
\renewcommand{\Pr}{\mathbf{Pr}}
\newcommand{\eps}{\varepsilon}
\renewcommand{\le}{\leqslant}
\renewcommand{\ge}{\geqslant}

\title{Maximal Kolmogorov Complexity in a Hamming Ball}
\author{Alexander Kozachinskiy\thanks{Centro Nacional de Inteligencia
Artificial (CENIA).}
\and Nikolay Vereshchagin\thanks{Moscow State University, HSE University,
Yandex. The research was carried out within the framework of the scientific
program at the Faculty of Mechanics and Mathematics, Lomonosov Moscow State
University.}}
\date{}

\begin{document}
\maketitle

\begin{abstract}
The \emph{minimal} Kolmogorov complexity of a string within Hamming distance $r$
of a given string $x$ is the algorithmic rate--distortion function of $x$, and
Vereshchagin and Vit\'anyi characterized completely which shapes it can have.
This paper is about the opposite extreme. For a binary string $x$ of length $n$
let $g_x(r)$ denote the \emph{maximal} Kolmogorov complexity of a string within
Hamming distance $r$ of $x$; we study which values, and more generally which
functions of $r$, this quantity can attain.

First we characterize, up to an additive error $O(\log n)$, the possible values
of the triple $(\C(x),r,g_x(r))$: writing $r_k$ for the radius of a Hamming ball
of cardinality about $2^{k}$, a triple $(k,r,l)$ is realizable if and only if
$\log V(r_k+r)\le l\le\min\{n,\,k+\log V(r)\}$, where $V(a)$ is the
cardinality of a ball of radius $a$. In particular, for $r_k+r>n/2$ both bounds
collapse to $n$ and only $l\approx n$ is realizable. The two ends of this
interval correspond to the two extreme ways of placing a set of complexity $k$
in the cube: a single Hamming ball, where the lower bound comes from Harper's
isoperimetric inequality, and an error-correcting code, which for the
intermediate parameters we relax to a family of centers with bounded covering
multiplicity, in the spirit of list decoding. Then we turn to the function
$r\mapsto g_x(r)$ as a whole: we establish four properties that it always has,
and show that the minimal and the maximal functions consistent with these
properties are both attained, for every complexity level $k$. Which intermediate
profiles are attainable remains open.
\end{abstract}

\section{Introduction}

Let $x$ be a binary string of length $n$. All strings in this paper are binary
of length $n$, and $\C$ denotes the plain Kolmogorov complexity (see
Section~\ref{sec:prelim}). Write $B_r(x)$ for the Hamming ball of radius $r$
centered at $x$, and consider the complexities of the strings it contains.

The \emph{minimal} complexity in $B_r(x)$ is a well-studied quantity. As a
function of $r$ it is the algorithmic rate--distortion function of the
individual string $x$ for the Hamming distortion, the individual-string
counterpart of Shannon's rate--distortion function~\cite{Shannon59,CT06}. It is
a variant of Kolmogorov's structure function and belongs to algorithmic
statistics~\cite{GTV01,VV04,VS17}; Vereshchagin and Vit\'anyi~\cite{VV10}
characterized completely which shapes this curve can have, and read it as a
denoising procedure for $x$: the string of minimal complexity in the ball is the
natural candidate for the ``denoised'' version of $x$ at distortion level $r$.

This paper is about the opposite extreme. Define
\[
  g_x(r)\ =\ \max\{\C(y): y\in B_r(x)\}.
\]
Thus $g_x(0)=\C(x)$, and $g_x$ is a non-decreasing function reaching $n-O(1)$ at
$r=n$. We call $g_x$ the \emph{complexity profile} of $x$ and ask: what can this
profile look like? In contrast with the minimum, this question seems not to have
been studied systematically.

It is related to the effect of random noise on complexity, studied by Posobin
and Shen~\cite{PS19}: they show that a \emph{typical} corruption of a simple
string increases its complexity. We ask about the extreme rather than the
typical behaviour: how complex can the most complex string in the ball be? The
answer turns out to depend on $x$ and not only on $\C(x)$, which is what makes
the profile an interesting object.

We prove two kinds of results: a pointwise one, describing the possible values
of $g_x(r)$ for a single radius $r$, and global ones, describing possible
profiles as functions of $r$.

\paragraph{Pointwise results.}
Call a triple $(k,r,l)$ of natural numbers \emph{realizable} if there is a
string $x$ of length $n$ with $\C(x)=k$ and $g_x(r)=l$. Let
$V(a)=\sum_{i\le a}\binom ni$ be the cardinality of a Hamming ball of radius
$a$, and for $k\le n$ let $r_k$ denote the least radius with
$V(r_k)\ge2^{k-1}$, so that $\log V(r_k)=k+O(\log n)$ and, conveniently,
$r_k\le\lfloor n/2\rfloor$ (Section~\ref{sec:prelim}). Our first main
result characterizes the realizable triples up to $O(\log n)$-changes of the
parameters.

\begin{theorem}\label{th:triples}
There is a constant $c$ (independent of $n$) such that for all $n$ and all
triples $(k,r,l)$ of natural numbers with $k\le n$ the following hold.
\begin{enumerate}
\item[(a)] \textup{(Necessity)}\footnote{In both parts, one may write $k$
  instead of $\log V(r_k)$: by~\eqref{eq:rk} the two differ by at most
  $\log(n+1)$.} If $(k,r,l)$ is realizable, then
  \[
    \log V(r_k+r)-c\log n\ \le\ l\ \le\
    \min\{n,\,\log V(r_k)+\log V(r)\}+c\log n .
  \]
\item[(b)] \textup{(Sufficiency)} If
  \[
    \log V(r_k+r)\ \le\ l\ \le\ \min\{n,\,\log V(r_k)+\log V(r)\},
  \]
  then some triple $(k'',r,l'')$ with $|k''-k|\le c\log n$ and $|l''-l|\le
  c\log n$ is realizable.
\end{enumerate}
\end{theorem}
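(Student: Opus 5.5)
For part (a) I prove the two inequalities separately. The upper bound is a counting argument. Every $y\in B_r(x)$ can be described by a shortest program for $x$ together with the index of $y$ in the lexicographically ordered ball $B_r(x)$, and $r$ (with self-delimiting overhead). This gives $\C(y)\le \C(x)+\log V(r)+O(\log n)$. Since $\C(x)=k\le\log V(r_k)+O(1)$ and $\C(y)\le n+O(1)$, the upper bound follows.

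The lower bound is the real content and uses Harper's vertex-isoperimetric inequality. Let $S=\{z:\C(z)\le k\}$, so $x\in S$ and $|S|<2^{k+1}$. Let $S'$ be the set of strings $y$ with $\C(y)\le l$ and $d(y,z)\le r$ for some $z\in S$. Suppose every $y\in B_r(x)$ has $\C(y)\le l$. I do not want to bound $|B_r(S)|$ directly, since that is false. Instead I reverse the roles: consider the set $T$ of $z$ such that every $y\in B_r(z)$ has $\C(y)\le l$.

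Given $k,l,r$, the set $T$ is co-enumerable-ish. More usefully, $B_r(T)\subseteq\{y:\C(y)\le l\}$, so $|B_r(T)|<2^{l+1}$. Harper's inequality says that $|B_r(T)|$ is at least the size of the $r$-neighbourhood of a ball of cardinality $|T|$. Hence $|T|\le V(a)$ with $\log V(a+r)\le l+1$, roughly.

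Now $x\in T$, and $T$ can be enumerated in the limit. To get a description of $x$ I would instead argue as follows. The set $D_l=\{y:\C(y)\le l\}$ is enumerable, and the $l$-dependent approximations $T_t$ (strings whose whole $r$-ball lies in $D_l$ at stage $t$) only grow. So $T$ is enumerable given $n,r,l$, and $x$ can be described by its index in this enumeration. This gives $k=\C(x)\le\log|T|+O(\log n)\le\log V(a)+O(\log n)$. Hence $r_k\le a+O(\cdot)$, so $\log V(r_k+r)\le\log V(a+r)+O(\log n)\le l+O(\log n)$.

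The step I expect to be the main obstacle is converting the $O(\log n)$ slack in $\log V$ into a radius shift. I would handle it with the estimate $V(a+1)\le (n+1)V(a)$ together with monotonicity, so that a loss of $c\log n$ bits costs at most $O(1)$ radius steps in either direction. Care is needed near $n/2$.

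For part (b) I exhibit the two extremal $x$ and interpolate between them.

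\emph{Lower end.} Take $x$ with $\C(x)\approx k$ and $x$ random inside the ball $B_{r_k}(0^n)$. Every $y\in B_r(x)$ lies in $B_{r_k+r}(0^n)$, so $g_x(r)\le\log V(r_k+r)+O(\log n)$, and equality up to $O(\log n)$ follows from (a).

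\emph{General $l$ in the interval.} Choose a set $C$ of centers of complexity $O(\log n)$ (found by exhaustive search given $n,k,r,l$) with about $2^{k}$ elements such that the balls $B_r(c)$, $c\in C$, cover a set $U$ of size about $2^{l}$ with covering multiplicity at most $\mathrm{poly}(n)\cdot 2^{k}V(r)/2^{l}$. Existence would come from a probabilistic argument: take random centers inside a ball $B_s(0)$ with $s$ chosen so that $V(s+r)\approx 2^{l}$, which is possible by monotonicity for $l$ in the range. Then let $x$ be a random element of $C$ with $\C(x)\approx k$.

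Every $y\in B_r(x)$ lies in $U$, so $g_x(r)\le l+O(\log n)$. Conversely, the point $y\in U$ of maximal complexity given $x$ satisfies $\C(y)\ge\C(x,y)-\log(\text{multiplicity})$. Pairs $(c,y)$ with $c\in C$ and $y\in B_r(c)$ can be counted, and choosing $x$ jointly random with $y$ gives $\C(y)\ge k+\log V(r)-\log(\text{mult})\approx l$.

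Establishing the bounded-multiplicity covering with the right image size is the hardest step. I would first try a Chernoff bound on the number of centers in each $B_r(y)$, restricted to $y\in B_{s+r}(0)$.
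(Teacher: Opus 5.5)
\textbf{Part (a).} Your upper bound is right. The enumeration-plus-Harper half of the lower bound is also right; it is essentially the paper's proof of (P4), with your $T$ playing the role of $Y$. The gap is the conversion step you flagged, and the fix you propose does not work.

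From $k\le\log V(a)+O(\log n)$ you cannot get $r_k\le a+O(1)$. For radii $\alpha n$ with $0<\alpha<1/2$ the ratio $V(s+1)/V(s)$ is only about $(1-\alpha)/\alpha$, so $O(\log n)$ bits correspond to $\Theta(\log n)$ radius steps. Near $n/2$, where the ratio is $1+\Theta(1/\sqrt n)$, they correspond to far more. Moving that radius gap from $a$ to $a+r$ with $V(s+1)\le(n+1)V(s)$ then costs $\log(n+1)$ bits per step. The total loss is $\Theta(\log^2 n)$ or worse, not $c\log n$.

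The missing ingredient is log-concavity of $s\mapsto V(s)$, i.e.\ \eqref{eq:logconc}. It makes $V(s+d)/V(s)$ non-increasing in $s$. So if $r_k>a$, then $V(r_k+r)/V(a+r)\le V(r_k)/V(a)\le(n+1)2^{k-1}/V(a)=2^{O(\log n)}$, with no counting of radius steps at all. This is exactly how the paper passes from $r_{k'}$ to $r_k$.

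\textbf{Part (b).} The lower-end example is fine, but the general construction fails. Take $2^k$ random centers in $B_s(0^n)$ with $V(s+r)\approx2^l$. Then $U=B_r(C)$ is in general exponentially smaller than $2^l$.

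Consider the top of the window, $l=k+\log V(r)<n$. There the balls $B_r(c)$ would have to be almost disjoint and fill a $1/\mathrm{poly}(n)$ fraction of $B_{s+r}(0^n)$, almost all of which has weight close to $s+r$. But a radius-$r$ ball around a point of weight at most $s$ has its bulk near weight $s+r-2rs/n$. Only about a $\binom{n-s}{r}/\binom nr=2^{-\Omega(n)}$ fraction of its points lie at weight $s+r$ when $r,s=\Theta(n)$. Concretely, for $k=0.3n$, $r=0.15n$, $l\approx0.91n$ (so $s\approx0.175n$), a layer-by-layer count gives $|U|\le2^{0.9n}$.

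Hence $g_x(r)\le l-\Omega(n)$ for every $x\in C$. Your multiplicity bound $\mathrm{poly}(n)\,2^kV(r)/2^l$ also fails, since it would force $|U|\ge2^l/\mathrm{poly}(n)$.

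What is missing is the paper's interpolation. Make the clusters full Hamming balls $B_q(x_i)$, and take $M=\lfloor2^k/V(q)\rfloor$ centers spread over the whole cube so that the balls $B_{q+r}(x_i)$ cover every point at most $8n$ times (Lemma~\ref{lem:pack}). With $C=\bigcup_iB_q(x_i)$ one gets $|B_r(C)|=MV(q+r)$ up to a factor $8n$. As $q$ runs from $0$ to about $r_k$, this quantity moves through the whole window in $O(\log n)$-bit steps.

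Even with this $C$, your complexity argument does not reach $l$. Taking $y$ of maximal complexity given $x$ and applying symmetry of information only certifies a typical neighbour. Such a $y$ lies at distance about $q+r-2qr/n$ from its center, so in general it is simpler than $l$. The paper argues from the $y$ side instead. By Harper, discarding the few simple points of $C$ still leaves $|B_r(C')|\approx2^l$ (property (C3)). So some $y$ of complexity about $l$ lies within distance $r$ of some $x\in C'$, and that $x$ is the required string.
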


In the loosest form, ignoring all additive $O(\log n)$ terms (in the
inequalities and in the parameters), Theorem~\ref{th:triples} says that a
triple $(k,r,l)$ is realizable if and only if
\[
  \log V(r_k+r)\ \le\ l\ \le\ \min\{n,\;\log V(r_k)+\log V(r)\} .
\]
The interval here is never empty.
Indeed, $V(a)\,V(b)\ge V(a+b)$ for all $a,b$ (a string of weight at most $a+b$
splits into a part of weight at most $a$ and a part of weight at most $b$, and
this map is injective), so the right end is at least the left end. 
For $r_k+r\ge n/2$ both ends collapse to $n$: indeed, $\log V(r_k+r)\ge n-1$,
since a ball of radius $\lfloor n/2\rfloor$ contains at least half of the cube.

Informally: for a string of complexity $k$, the maximal complexity at distance
$r$ can be anything between $\log V(r_k+r)$ (as for a random string of a ball of
radius $r_k$ centered at $0^n$) and $k+\log V(r)$ (as for a random codeword of a
code correcting $r$ errors~\cite{MS77}), and these are the only constraints. The
upper bound is straightforward and the lower bound follows from Harper's
isoperimetric inequality. The sufficiency is proved by a construction
interpolating between the two extreme examples above: a union of $M$ balls of
radius $q$ centered at codewords, with $q$ ranging from $0$ to $r_k$. For the
intermediate parameters a code with disjoint balls need not exist, and we
replace it by a list-decoding relaxation~\cite{Elias57,Guruswami04}: a family of
centers whose balls cover every point $O(n)$ times (Lemma~\ref{lem:pack}). 

\paragraph{Global results.}
The profile $g_x$ always has the following four properties
(Section~\ref{sec:profile}): it is non-decreasing; $g_x(0)=\C(x)$ and
$g_x(r)\le n+O(1)$; it grows at most as fast as
\[
  g_x(r+a)\ \le\ g_x(r)+\log V(a)+O(\log n);
\]
and at least as fast as
\[
  g_x(r)\ge\log V(m)+2\log n+O(1)\ \Longrightarrow\ g_x(r+a)\ \ge\ \log V(m+a).
\]
For a given complexity level $k=g_x(0)$, the minimal function consistent with
these properties is $g_{\min}(r)=\log V(r_k+r)$, and the maximal one is
$g_{\max}(r)=\min\{n,\,k+\log V(r)\}$. We show
(Section~\ref{sec:extremal}) that both are attained for every $k$: there are
strings of complexity $k+O(\log n)$ whose profile is $g_{\min}+O(\log n)$, and
strings whose profile is $g_{\max}+O(\log n)$ --- \emph{for all radii
simultaneously}. Theorem~\ref{th:triples} controls the profile at a single
radius; whether every function satisfying the four properties is a profile of
some string remains open (Section~\ref{sec:open}).

\section{Preliminaries}\label{sec:prelim}

$\C(y)$ denotes the plain Kolmogorov complexity of $y$ with respect to a fixed
optimal description method; $\C(y\mid z)$ is the conditional version. We use the
symmetry of information: $\C(x,y)=\C(x)+\C(y\mid x)+O(\log n)=\C(y)+\C(x\mid
y)+O(\log n)$ for strings of length $n$. See~\cite{LV08,SUV} for background. The
length $n$ is treated as known: all our descriptions include $n$ and a constant
number of parameters not exceeding $n$, which is accounted for by the
$O(\log n)$ terms. $\log$ denotes the binary logarithm.

The Hamming ball of radius $r$ centered at $x$ is denoted $B_r(x)$, and
$B_r(X)=\bigcup_{x\in X}B_r(x)$ for a set $X$. We write
$V(a)=\sum_{i\le a}\binom ni$ for the cardinality of a ball of radius $a$ (it
depends on $n$, which is always clear from the context); the definition makes
sense for every $a\ge0$, with $V(a)=2^n$ for $a\ge n$. For $0\le k\le n$ we let
\[
  r_k\ =\ \min\{r:\ V(r)\ge2^{k-1}\}
\]
be the radius at which a ball first reaches the volume $2^{k-1}$,
so that 
$r_0=r_1=0$. Two properties of $r_k$ are used
throughout. First,
\begin{equation}\label{eq:rk}
  k-1\ \le\ \log V(r_k)\ \le\ k-1+\log(n+1),
\end{equation}
by the definition of $r_k$ and because
$V(r_k)\le(n+1)V(r_k-1)<(n+1)2^{k-1}$ by~\eqref{eq:ratio} if $r_k\ge1$, while
for $r_k=0$ we have $\log V(r_k)=0\le k-1+\log(n+1)$, as $n\ge1$. Second,
\begin{equation}\label{eq:rkhalf}
  r_k\ \le\ r_n\ =\ \lfloor n/2\rfloor, 
\end{equation}
since $r_k$ is non-decreasing in $k$, while
$V(\lfloor n/2\rfloor)\ge2^{n-1}>V(\lfloor n/2\rfloor-1)$: for odd $n$ these
are the identities $V(\frac{n-1}2)=2^{n-1}$ and
$V(\frac{n-1}2-1)=2^{n-1}-\binom n{(n-1)/2}$, and for even $n$ they read
$V(\frac n2)=\frac12\bigl(2^n+\binom n{n/2}\bigr)$ and
$V(\frac n2-1)=\frac12\bigl(2^n-\binom n{n/2}\bigr)$.

We will repeatedly use two standard facts. The first is (a corollary of)
Harper's vertex-isoperimetric theorem~\cite{Harper66}: \emph{among the subsets
of $\{0,1\}^n$ of a given cardinality, Hamming balls have the smallest
neighborhoods}. See~\cite{FF81} for a short proof and~\cite{Harper04} for a
systematic treatment of combinatorial isoperimetric problems. We need it in the
following form:
\begin{equation}\label{eq:harper}
  |A|\ \ge\ V(m)\quad\Longrightarrow\quad |B_a(A)|\ \ge\ V(m+a).
\end{equation}
The second concerns the growth of $V$ and  log-concavity of the sequence $V(0),V(1),\dots,V(n)$:
\begin{lemma}\label{l:growth}
\begin{equation}\label{eq:ratio}
  \frac{V(s)}{V(s-1)}\le n+1\, \text{ for all }s\ge1,
  \qquad
  \frac{V(s)}{V(s-1)}\ge1+\frac1n\ \text{ for }1\le s\le n/2 .
\end{equation}
\begin{equation}\label{eq:logconc}
  V(s)^2\ \ge\ V(s-1)\,V(s+1)\qquad(1\le s\le n-1),
\end{equation}
that is, $V(s+1)/V(s)$ is non-increasing in $s$; consequently, the ratio
$V(s+a)/V(s)=\prod_{j<a}V(s+j+1)/V(s+j)$ is non-increasing in $s$ for every
fixed $a\ge0$. 
\end{lemma}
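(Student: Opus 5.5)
The plan is to prove the three claims of Lemma~\ref{l:growth} directly from the binomial coefficients, writing $b_i=\binom ni$ so that $V(s)=\sum_{i\le s}b_i$.

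For the upper bound in~\eqref{eq:ratio}, it suffices to show $b_s\le n\,V(s-1)$ for $1\le s\le n$; for $s>n$ the ratio is $1$. The bound $b_s\le n\,b_{s-1}$ gives this, because $b_s/b_{s-1}=(n-s+1)/s\le n$. Hence $V(s)=V(s-1)+b_s\le(n+1)V(s-1)$.

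For the lower bound, fix $1\le s\le n/2$; we need $b_s\ge V(s-1)/n$. Since $s\le n/2$, the ratio $b_i/b_{i-1}=(n-i+1)/i\ge1$ for $i\le s$, so the sequence $b_0,\dots,b_s$ is non-decreasing. Thus $V(s-1)\le s\,b_{s-1}\le s\,b_s\le n\,b_s$, which gives $V(s)/V(s-1)\ge1+1/n$.

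For log-concavity~\eqref{eq:logconc}, set $V=V(s)$, $p=b_s$ and $q=b_{s+1}$. Then $V(s-1)=V-p$ and $V(s+1)=V+q$, and
\[
  V^2-(V-p)(V+q)=V(p-q)+pq .
\]
So we need $V(q-p)\le pq$. If $q\le p$ this is immediate. Otherwise we need $V\le pq/(q-p)$. The natural route is to use the log-concavity of the binomial coefficients, namely that $b_i/b_{i-1}$ is decreasing in $i$. Let $t=q/p>1$. Then $b_{s-j}\le p\,t^{-j}$, since each backward step divides by at least $t$. Summing the geometric series,
\[
  V\le p\sum_{j\ge0}t^{-j}=\frac{pt}{t-1}=\frac{pq}{q-p},
\]
which is exactly what is needed. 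This step is where the real content lies: one has to recognize that log-concavity of $b_i$ yields the geometric domination of the tail. The remaining care is the boundary range $1\le s\le n-1$, where $q=b_{s+1}$ is defined.

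The consequence follows immediately. By~\eqref{eq:logconc}, $V(s+1)/V(s)$ is non-increasing for $s\le n-1$. It is also at least $1$, and it equals $1$ from $s=n$ on, so monotonicity persists past $n$. The product formula then shows that $V(s+a)/V(s)$ is non-increasing in $s$.
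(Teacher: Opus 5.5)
Your proof is correct and follows the paper's argument. The two bounds in \eqref{eq:ratio} are proved exactly as in the paper. For \eqref{eq:logconc} you reduce to the same quantity $V(s)(b_s-b_{s+1})+b_sb_{s+1}=b_sV(s)-b_{s+1}V(s-1)$. Your geometric-series bound $b_{s-j}\le b_s t^{-j}$ is a repackaging, with a harmless case split, of the paper's termwise inequality $b_sb_i\ge b_{s+1}b_{i-1}$; both rest on log-concavity of the binomial coefficients.

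Your remark that $V(s+1)/V(s)\ge 1$, with equality for $s\ge n$, so that monotonicity extends past $n$, fills in a detail the paper leaves implicit.
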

The proofs of these technical inequalities are deferred to Section~\ref{proof:l2}.

\section{Properties of the profile}\label{sec:profile}

\begin{proposition}\label{prop:props}
For every $x$ of length $n$:
\begin{enumerate}
\item[(P1)] $g_x$ is non-decreasing;
\item[(P2)] $g_x(0)=\C(x)$ and $g_x(r)\le n+O(1)$ for all $r$;
\item[(P3)] $g_x(r+a)\le g_x(r)+\log V(a)+O(\log n)$;
\item[(P4)] if $g_x(r)\ge\log V(m)+2\log n+O(1)$, then
  $g_x(r+a)\ge\log V(m+a)$.
\end{enumerate}
\end{proposition}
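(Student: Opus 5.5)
(P1) and (P2) are immediate from the definition. Since $B_r(x)\subseteq B_{r+a}(x)$, the maximum over the larger ball is at least as large, which gives (P1). Because $B_0(x)=\{x\}$, we get $g_x(0)=\C(x)$. Every string of length $n$ has complexity at most $n+O(1)$, which gives the second half of (P2).

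For (P3) I will describe every $z\in B_{r+a}(x)$ relative to a string $y\in B_r(x)$. Write $z=x\oplus e$ with $|e|\le r+a$, and split the support of $e$ into a part $e_1$ of weight at most $r$ and a part $e_2$ of weight at most $a$. Put $y=x\oplus e_1\in B_r(x)$, so that $z=y\oplus e_2\in B_a(y)$. Given $y$ and $a$, the string $z$ is determined by its index among the $V(a)$ elements of $B_a(y)$ in a fixed enumeration. Hence
\[
\C(z)\le \C(y)+\log V(a)+O(\log n)\le g_x(r)+\log V(a)+O(\log n).
\]
The $O(\log n)$ term pays for $a$, $n$ and the self-delimiting encoding of the first part. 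Taking the maximum over $z$ gives (P3).

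(P4) is the substantive part, and I will derive it from Harper's inequality~\eqref{eq:harper}. Fix $y\in B_r(x)$ with $\C(y)=g_x(r)\ge\log V(m)+2\log n+c$. Let $i=g_x(r)$ and let $A=\{u:\ \C(u)\le i\}$ be the set of strings of complexity at most $i$; this set is enumerable given $i$. The first step is to find a set of cardinality at least $V(m)$ that consists of strings of complexity about $i$ and is enumerable from $x$ and small parameters; the natural choice would be $A\cap B_r(x)$. That choice fails: it is not clear that this set is large, since $y$ might be its only element. So I will argue by contradiction, in the other direction.

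Suppose every $z\in B_{r+a}(x)$ has $\C(z)<\log V(m+a)$. Then $B_{r+a}(x)\subseteq S=\{z:\C(z)<\log V(m+a)\}$, and $|S|<V(m+a)$. Consider $T=\{u:\ B_a(u)\subseteq S\}$, which contains $B_r(x)$ and in particular contains $y$. The set $B_a(T)$ is contained in $S$, so $|B_a(T)|<V(m+a)$. By the contrapositive of~\eqref{eq:harper}, $|T|<V(m)$. The set $T$ is co-enumerable rather than enumerable, but it is determined by $n,m,a$ together with the exact number of strings in $S$. That number is at most $n$ bits and already costs about $\log V(m+a)$ bits of information, so a cruder trick is needed.

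Instead I will enumerate $S$ until exactly $N=|S|$ elements have appeared, where $N$ is given as advice; this is legitimate only if $N$ is cheap. My fix is to choose the threshold adaptively: take $j=\log V(m+a)$ and let $N_j=|\{z:\C(z)<j\}|$. Rather than specifying $N_j$ exactly, I specify the first $\lfloor\log N_j\rfloor-\log n$ or so bits of it, which loses only $O(\log n)$ elements of control. This is where the $2\log n$ slack should be spent.

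Once $S$ is known, $T$ is computable, and $y$ is specified by its index in $T$, giving $\C(y)<\log V(m)+O(\log n)$. This contradicts the hypothesis once the constant is chosen correctly. The main obstacle is exactly this accounting: making $T$ describable with only about $2\log n$ bits of overhead, using the standard fact that the number of strings of complexity $<j$ has complexity within $O(\log n)$ of a full description. I would first try the approach of specifying $N_j$ with $\log n$ bits of precision and checking that the resulting approximate $S'\supseteq S$ still satisfies $|S'|\le nV(m+a)$, so that the Harper bound loses only a factor $n$.
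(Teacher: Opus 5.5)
Your arguments for (P1)--(P3) are fine and match the paper. For (P4) your setup is also exactly the paper's: assume $g_x(r+a)<\log V(m+a)$, put $S=\{z:\C(z)<\log V(m+a)\}$ and $T=\{u: B_a(u)\subseteq S\}$, so that $B_r(x)\subseteq T$ and Harper gives $|T|<V(m)$.

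The gap comes at the very next step, from a wrong claim. $T$ is not co-enumerable; it is \emph{enumerable} given $n$, $a$ and the threshold. $S$ is enumerable: run all programs shorter than the threshold in parallel, and list $z$ as soon as one of them is seen to output it. The condition $B_a(u)\subseteq S$ involves only the finitely many points of $B_a(u)$ and is monotone in $S$. So $u$ can be listed as soon as every point of $B_a(u)$ has appeared in the enumeration of $S$. Hence every $u\in T$ is determined by $m$, $a$ and its index in this enumeration; this includes your $y$ with $\C(y)=g_x(r)$. That gives $\C(y)\le\log V(m)+2\log n+O(1)$, which contradicts the hypothesis. This is the whole argument in the paper, and no information about $|S|$ is needed.

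The workaround you sketch instead does not close the gap, and you leave it unfinished. Enumerating $S$ until $N'$ elements have appeared, for some $N'\le|S|$, produces a \emph{subset} $S'\subseteq S$, never the superset you want. If $N'>|S|$, the enumeration never stops at all. With $S'\subseteq S$, the set $T'=\{u:B_a(u)\subseteq S'\}$ is contained in $T$. But nothing guarantees that $y$, or any complex point of $B_r(x)$, lies in $T'$: a single element of $B_a(y)$ missing from $S'$ excludes it. So the accounting you call ``the main obstacle'' cannot be carried out along those lines. The missing idea is simply that membership in $T$ is certified at a finite stage of the enumeration of $S$.
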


\begin{proof}
(P1) and (P2) are immediate. For (P3), let $y$ be any string at distance at most
$r+a$ from $x$; we bound $\C(y)$. There is a string $z$ at distance at most $a$
from $y$ and at most $r$ from $x$ (move from $y$ towards $x$ by $a$ steps along
positions where they differ). Then $z$ can be described by $g_x(r)$ bits, and
$y$ can be recovered from $z$ and the index of $y$ in $B_a(z)$, which takes
$\log V(a)$ more bits, plus $O(\log n)$ bits of overhead.

For (P4), let $S=\{z:\C(z)<\log V(m+a)\}$, so $|S|<V(m+a)$, and let
\[
  Y\ =\ \{y: B_a(y)\subseteq S\} .
\]
First, $|Y|< V(m)$: otherwise $|Y|\ge V(m)$ and, by
Harper~\eqref{eq:harper}, $|B_a(Y)|\ge V(m+a)>|S|$, contradicting
$B_a(Y)\subseteq S$. Second, $Y$ can be enumerated given $n$, $a$ and the
threshold $\log V(m+a)$: the set $S$ is enumerable, and the condition
$B_a(y)\subseteq S$ is monotone in $S$. Hence every element of $Y$ is determined
by its index in this enumeration and the parameters, so
\begin{equation}\label{eq:y}
  y\in Y\ \Longrightarrow\ \C(y)\ \le\ \log V(m)+2\log n+O(1).
\end{equation}

For the sake of contradiction assume $g_x(r+a)<\log V(m+a)$. Then for every $y\in B_r(x)$ we have
$B_a(y)\subseteq B_{r+a}(x)\subseteq S$, that is, $B_r(x)\subseteq Y$; hence by~\eqref{eq:y} for all $y\in B_r(x)$ we have $\C(y)\ \le\ \log V(m)+2\log n+O(1)$. In other words,
$g_x(r)\le\log V(m)+2\log n+O(1)$, contradicting the assumption.
\end{proof}

\medskip
\emph{The proof of part (a) of
Theorem~\ref{th:triples}}.
We may assume that $k\ge2\log n+O(1)$, with the constant from (P4). Otherwise
$\log V(r_k)=O(\log n)$ by~\eqref{eq:rk}, and the left inequality of
Theorem~\ref{th:triples}(a) is trivial: the ball $B_a(x)$ has $V(a)$ elements,
so $g_x(a)\ge\log V(a)$, while
$\log V(r_k+a)\le\log V(r_k)+\log V(a)=\log V(a)+O(\log n)$ by the
splitting inequality $V(b)V(c)\ge V(b+c)$.

So let $x$ be a string with $\C(x)=k$ and let $a$ be a radius; we have to prove
the two inequalities of Theorem~\ref{th:triples}(a) for $l=g_x(a)$.

\emph{The upper bound.} Apply (P3) with $r=0$: as $g_x(0)=\C(x)=k$ by (P2),
\[
  g_x(a)\ \le\ k+\log V(a)+O(\log n) .
\]
Together with the bound $g_x(a)\le n+O(1)$, also from (P2), this gives
$g_x(a)\le\min\{n,\,k+\log V(a)\}+O(\log n)$. Finally, $k\le\log V(r_k)+1$ by
the left inequality of~\eqref{eq:rk}, so $k$ may be replaced here by
$\log V(r_k)$, and we obtain the upper bound of the theorem.

\emph{The lower bound.} Here we use (P4), again with $r=0$: it says that if
$k=g_x(0)\ge\log V(m)+2\log n+O(1)$, then $g_x(a)\ge\log V(m+a)$. Naturally we
apply it to the largest admissible $m$, that is, to
\[
  m\ =\ \max\{m':\ \log V(m')+2\log n+O(1)\le k\}
\]
(the set is non-empty: $m'=0$ is admissible, since $\log V(0)=0$ and
$k\ge2\log n+O(1)$ by our assumption). This gives
\begin{equation}\label{eq:pointwise}
  g_x(a)\ \ge\ \log V(m+a) ,
\end{equation}
and it remains to show that $\log V(m+a)\ge\log V(r_k+a)-O(\log n)$.

The radii $m$ and $r_k$ are close to each other, because $m$ is defined by the
same recipe as $r_k$, only for a complexity level lowered by $2\log n+O(1)$.
Precisely, let $k'=k-2\log n-O(1)+1$ with the same constant, so that
$k-k'=O(\log n)$ and $k'\le k$. By the maximality of $m$ we have
$\log V(m+1)>k-2\log n-O(1)=k'-1$, that is, $V(m+1)\ge2^{k'-1}$, and therefore
$r_{k'}\le m+1$ by the definition of $r_{k'}$. Now we compare $V(m+a)$ with
$V(r_k+a)$ in two steps.

\emph{From $m$ to $r_{k'}$.} As $m\ge r_{k'}-1$ and $V$ is non-decreasing,
\[
  \log V(m+a)\ \ge\ \log V(r_{k'}+a-1)\ \ge\ \log V(r_{k'}+a)-\log(n+1)
\]
by the first inequality of~\eqref{eq:ratio} (and trivially so if
$r_{k'}+a=0$).

\emph{From $r_{k'}$ to $r_k$.} Both radii are shifted by the same $a$, and the
point is that the shift can only decrease the multiplicative gap between the
two volumes. Indeed, let $d=r_k-r_{k'}\ge0$ (the radius $r_k$ is non-decreasing
in $k$, and $k'\le k$). By log-concavity~\eqref{eq:logconc} the ratio
$V(s+d)/V(s)$ is non-increasing in $s$; comparing its values at $s=r_{k'}+a$
and at $s=r_{k'}$, we get
\[
  \frac{V(r_k+a)}{V(r_{k'}+a)}\ =\ \frac{V(r_{k'}+a+d)}{V(r_{k'}+a)}\ \le\
  \frac{V(r_{k'}+d)}{V(r_{k'})}\ =\ \frac{V(r_k)}{V(r_{k'})}\ \le\
  \frac{(n+1)2^{k-1}}{2^{k'-1}}\ =\ (n+1)\,2^{k-k'} ,
\]
the last step by the right inequality of~\eqref{eq:rk} applied to $r_k$ and the
left one applied to $r_{k'}$. Hence
\[
  \log V(r_{k'}+a)\ \ge\ \log V(r_k+a)-(k-k')-\log(n+1)\ =\
  \log V(r_k+a)-O(\log n) .
\]

Combining the two steps with~\eqref{eq:pointwise} we get
$g_x(a)\ge\log V(r_k+a)-O(\log n)$, which is the lower bound of the theorem.
This proves part (a) of Theorem~\ref{th:triples}.

\section{Realizable triples}\label{sec:triples}

In this section we prove part (b) of Theorem~\ref{th:triples}. The construction
is based on the following notion.

\begin{definition}
Let $k,r,l\in\N$ and $\eps\in(0,1)$. A set $C\subseteq\{0,1\}^n$ is a
\emph{$(k,r,l,\eps)$-concentrator} if
\begin{enumerate}
\item[(C1)] $\eps\,2^{k}\le|C|\le 2^{k}$;
\item[(C2)] $\eps\,2^{l}\le|B_r(C)|\le 2^{l}$;
\item[(C3)] for every $C'\subseteq C$ with $|C'|\ge(1-\eps)|C|$ we have
  $|B_r(C')|\ge\eps\,|B_r(C)|$.
\end{enumerate}
\end{definition}

Thus a concentrator is a set of about $2^k$ strings whose $r$-neighborhood has
about $2^l$ elements and, moreover, cannot be shrunk much by discarding a small
fraction of $C$. The parameter $\eps$ will be $1/\mathrm{poly}(n)$, and all
occurrences of $\eps$ below are absorbed by $O(\log(1/\eps))=O(\log n)$ terms.

\begin{proposition}\label{prop:conc}
If a $(k,r,l,\eps)$-concentrator exists, then some triple $(k'',r,l'')$ with
$|k''-k|=O(\log(n/\eps))$ and $|l''-l|=O(\log(n/\eps))$ is realizable.
\end{proposition}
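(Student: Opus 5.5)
Given a $(k,r,l,\eps)$-concentrator $C$, the plan is to pick $x\in C$ that is ``typical'' for $C$ relative to a simple description of $C$. Then $\C(x)\approx k$, and $g_x(r)\approx l$. The obstacle is that $C$ itself may have huge complexity. So first I replace it by a simple concentrator. The property of being a $(k,r,l,\eps)$-concentrator is decidable given $n,k,r,l,\eps$, by exhaustive search over all subsets of the cube. Hence the lexicographically first concentrator $C$ has complexity $O(\log(n/\eps))$, and I fix this $C$. All complexities below are then equal, up to $O(\log(n/\eps))$, to complexities conditional on $C$.

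\emph{Upper bounds.} Every $x\in C$ has $\C(x)\le\log|C|+O(\log(n/\eps))\le k+O(\log(n/\eps))$. Every $y\in B_r(x)\subseteq B_r(C)$ has $\C(y)\le\log|B_r(C)|+O(\log(n/\eps))\le l+O(\log (n/\eps))$. Both follow by (C1), (C2) and the description of $x$ or $y$ by its index in $C$ or $B_r(C)$. So $g_x(r)\le l+O(\log(n/\eps))$ for every $x\in C$.

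\emph{Lower bounds via (C3).} Let $t=c\log(n/\eps)$ for a large constant $c$. Define $C'$ as the set of those $x\in C$ such that some $y\in B_r(x)$ has $\C(y)\ge\log|B_r(C)|-t$. Let $D$ be the set of $x\in C$ with $\C(x)<\log|C|-\log(1/\eps)-1$. Then $|D|<\eps|C|/2$, so removing $D$ costs little. I claim that $C\setminus C'$ has fewer than $\eps|C|/2$ elements. Suppose not. Put $E=C\setminus C'$, so $|C\setminus E|\le(1-\eps/2)|C|$. Taking the concentrator parameter conservatively (or replacing $\eps$ by $\eps/2$ in the definition used), (C3) applied to $C\setminus E$ gives nothing here. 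I need the reverse direction, so I apply (C3) as follows.

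Consider the set $S=\{y\in B_r(C):\C(y)<\log|B_r(C)|-t\}$, so $|S|<2^{-t}|B_r(C)|$. Then $E=\{x\in C:B_r(x)\subseteq S\}$. If $|E|\ge(1-\eps)|C|$, then (C3) yields $|B_r(E)|\ge\eps|B_r(C)|$. But $B_r(E)\subseteq S$ has size below $2^{-t}|B_r(C)|$, which is a contradiction for $t>\log(1/\eps)$. Hence $|E|<(1-\eps)|C|$, so $|C'|>\eps|C|$. Together with $|D|<\eps|C|/2$, some $x\in C'\setminus D$ exists.

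For such $x$ we get $\C(x)\ge k-O(\log(n/\eps))$, using (C1) and the definition of $D$. We also get $g_x(r)\ge\log|B_r(C)|-t\ge l-O(\log(n/\eps))$, using (C2). Combined with the upper bounds, the triple $(\C(x),r,g_x(r))$ is realizable with $k''=\C(x)$ and $l''=g_x(r)$ within $O(\log(n/\eps))$ of $k$ and $l$.

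The main point to check is the simplicity of $C$. The search is finite, and the thresholds in $D$ and $S$ are computable from the parameters, so the index description is valid. Beyond that, everything is counting.
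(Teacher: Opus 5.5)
Your proof is correct and follows the paper's route: the exhaustively found (hence simple) concentrator, upper bounds from the index in $C$ or in $B_r(C)$, and lower bounds from (C3) plus counting of simple strings. The only difference is which set (C3) is applied to: you apply it to the set $E$ of points whose whole $r$-ball is simple and conclude $|E|<(1-\eps)|C|$, whereas the paper applies it directly to the set of complex points of $C$, which has size at least $(1-\eps)|C|$, and finds a complex $y_0$ in its neighborhood. Two small fixes: delete the abandoned claim $|C\setminus C'|<\eps|C|/2$, which is neither proved nor needed, and round $\eps$ down to a power of two, as the paper does, so that $\eps$ and hence $C$ are describable in $O(\log(n/\eps))$ bits.
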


\begin{proof}
Given the parameters $n,k,r,l$ and $\eps$ (we may round $\eps$ down to a power
of two, so the parameters take $O(\log(n/\eps))$ bits), we can find some
$(k,r,l,\eps)$-concentrator $C$ by exhaustive search: all conditions are finite.
Then:
\begin{itemize}
\item every $x\in C$ satisfies $\C(x)\le k+O(\log(n/\eps))$ ($x$ is determined
  by the parameters and its index in $C$), and likewise every $y\in B_r(C)$
  satisfies $\C(y)\le l+O(\log(n/\eps))$;
\item let $C'$ be the set of elements of $C$ of complexity at least
  $k-2\log(1/\eps)$. Fewer than $\eps^2 2^k$ strings altogether have smaller
  complexity, and $|C|\ge\eps2^k$ by (C1), so
  $$|C\setminus C'|\le\eps^22^k\le\eps|C|$$ and hence
  $|C'|\ge(1-\eps)|C|$. By (C3) and (C2),
  $|B_r(C')|\ge\eps|B_r(C)|\ge\eps^22^{l}$, while fewer than $\eps^22^l$
  strings have complexity below $l-2\log(1/\eps)$; so some $y_0\in B_r(C')$
  has $\C(y_0)\ge l-2\log(1/\eps)$.
\end{itemize}
The string $y_0$ lies in $B_r(x)$ for some $x\in C'$. For this $x$ we have
$\C(x)\in[k-2\log(1/\eps),\,k+O(\log(n/\eps))]$ and
$g_x(r)\in[l-2\log(1/\eps),\,l+O(\log(n/\eps))]$ (the upper bound because
$B_r(x)\subseteq B_r(C)$). So the triple $(\C(x),r,g_x(r))$ is as required.
\end{proof}

It remains to construct concentrators for the whole range of parameters. The
extreme cases are a single Hamming ball ($l=\log V(r_k+r)$) and an
error-correcting code ($l=k+\log V(r)$); the general construction interpolates
between them. Instead of codes with disjoint balls, whose existence is
problematic for some parameters, we use the following relaxation. It is the
counting counterpart of list decoding~\cite{Elias57,Guruswami04}: bounding the
covering multiplicity by $8n$ is exactly saying that every received word has at
most $8n$ candidate codewords at distance $s$. 

\begin{lemma}[centers with small covering multiplicity]\label{lem:pack}
Let $s\ge0$ be any radius and let $M$ be such that
$M\cdot V(s)\le 2n\cdot2^n$. Then there exist $x_1,\dots,x_M\in\{0,1\}^n$ such that every $y\in\{0,1\}^n$ belongs to at
most $8n$ of the balls $B_s(x_i)$.
\end{lemma}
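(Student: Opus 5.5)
We will use the probabilistic method, choosing the centers independently and uniformly at random and then deleting the bad ones. The bound $8n$ suggests a Chernoff-type estimate combined with a union bound over the $2^n$ points.

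Fix $M$ with $M V(s)\le 2n\cdot 2^n$ and draw $x_1,\dots,x_M$ independently and uniformly from $\{0,1\}^n$. For a fixed $y$, the events $y\in B_s(x_i)$ are independent with probability $p=V(s)/2^n$ each. Hence the multiplicity $N_y=\#\{i: y\in B_s(x_i)\}$ is binomial with mean $\mu=Mp\le 2n$.

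The key estimate is
\[
  \Pr[N_y\ge 8n]\ \le\ \binom{M}{8n}p^{8n}\ \le\ \Bigl(\frac{eMp}{8n}\Bigr)^{8n}\ \le\ \Bigl(\frac{e}{4}\Bigr)^{8n}.
\]
Since $(e/4)^{8}<2^{-4}$, this is less than $2^{-4n}$; numerically $(e/4)^8\approx 0.045<1/16$. If $8n>M$ the probability is $0$ and the claim is trivial.

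A union bound over the $2^n$ strings $y$ then gives
\[
  \Pr[\exists y:\ N_y\ge 8n]\ \le\ 2^n\cdot 2^{-4n}<1.
\]
So some choice of $x_1,\dots,x_M$ has every $y$ covered at most $8n$ times (indeed fewer than $8n$), which proves the lemma.

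The main obstacle is only making sure the tail bound is strong enough to beat the factor $2^n$ from the union bound. The crude estimate $\binom{M}{t}p^t\le (eMp/t)^t$ with $t=8n$ and $Mp\le 2n$ does this with room to spare.

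If the tail estimate turned out tight in some edge case, a fallback is to use the expectation $\mathbf{E}\sum_y \mathbf{1}[N_y\ge 8n]<1$ directly. The same bound gives it.

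There are two small points to check. First, the $x_i$ need not be distinct; the statement does not ask for this. Second, the lemma covers every $M$ up to the bound, including $M=0$ trivially.
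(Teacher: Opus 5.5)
Your proof is correct and is essentially the paper's argument. Both choose independent uniform centers, bound the multiplicity tail by a union bound over the $\binom{M}{8n}$ index sets to get $(eMp/8n)^{8n}\le(e/4)^{8n}$, and finish with a union bound over the $2^n$ strings $y$; your $(e/4)^{8}<2^{-4}$ is simply a sharper reading of that estimate than the paper's $2^{-n}$. The ``deleting the bad ones'' announced in your first sentence is never used, and it should not be, since the lemma requires exactly $M$ centers.
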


\begin{proof}
Choose $x_1,\dots,x_M$ independently and uniformly at random, and fix a string
$y$. For each $i$ the event ``$y\in B_s(x_i)$'' has probability
\[
  p\ :=\ \frac{|B_s(y)|}{2^n}\ =\ \frac{V(s)}{2^n}
\]
(indeed, $y\in B_s(x_i)$ if and only if $x_i\in B_s(y)$), and these $M$ events
are independent. By the assumption of the lemma, $Mp\le2n$.

If $y$ is covered more than $8n$ times, then all the events with indices in
some set $I\subseteq\{1,\dots,M\}$ of size $8n$ occur; for a fixed $I$ this has
probability $p^{8n}$, so the union bound over all such $I$ gives
\begin{multline*}
  \Pr[\text{$y$ is covered more than $8n$ times}]\ \le\
  \binom{M}{8n}p^{8n}\ \le\ \Bigl(\frac{eM}{8n}\Bigr)^{8n}p^{8n}\\
  =\ \Bigl(\frac{e\,Mp}{8n}\Bigr)^{8n}\ \le\
  \Bigl(\frac{e}{4}\Bigr)^{8n}\ <\ 2^{-n},
\end{multline*}
the last but one inequality because $Mp\le2n$, and the last one because
$(e/4)^8<1/2$. The second inequality is the standard bound $\binom Mt\le(eM/t)^t$.
It comes from the estimate
$t!\ge(t/e)^t$, which in turn is a single term of the exponential series:
$e^t=\sum_{j\ge0}t^j/j!\ge t^t/t!$; together with $\binom Mt\le M^t/t!$ this
gives $\binom Mt\le M^t(e/t)^t$.

Finally, by the union bound over all $2^n$ strings $y$, with positive
probability no string is covered more than $8n$ times.
\end{proof}

\begin{lemma}\label{lem:construct}
Let $q\le n/2$ be such that $V(q)\le2^k$. Set  $M:=\lfloor2^k/V(q)\rfloor\ge1$.
Assume further that 
$M\cdot V(q+r)\le2n\cdot2^n$. Then there is a
$\bigl(k,\,r,\,\lceil\log (M V(q+r))\rceil,\,1/32n^2\bigr)$-concentrator.
\end{lemma}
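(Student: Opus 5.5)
The construction is a union of balls. Apply Lemma~\ref{lem:pack} with radius $s=q+r$; the hypothesis $M\cdot V(q+r)\le 2n\cdot 2^n$ is exactly what it requires. This gives centers $x_1,\dots,x_M$ such that every string lies in at most $8n$ of the balls $B_{q+r}(x_i)$. Put
\[
C=\bigcup_{i\le M}B_q(x_i),\qquad B_r(C)=\bigcup_{i\le M}B_{q+r}(x_i).
\]
The second identity holds because $B_r(B_q(x))=B_{q+r}(x)$ for $q+r\le n$; for larger radii both sides are the whole cube. Throughout set $\eps=1/32n^2$ and $l=\lceil\log(MV(q+r))\rceil$. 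Since $q\le q+r$, every string also lies in at most $8n$ of the smaller balls $B_q(x_i)$.

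\emph{Conditions (C1) and (C2)} follow from double counting. For (C1), we have $|C|\le MV(q)\le 2^k$. Bounded multiplicity gives $|C|\ge MV(q)/8n$. From $M=\lfloor 2^k/V(q)\rfloor\ge1$ we get $MV(q)\ge 2^{k-1}$, hence $|C|\ge 2^k/16n\ge\eps 2^k$. For (C2), we have $|B_r(C)|\le MV(q+r)\le 2^l$. Also $|B_r(C)|\ge MV(q+r)/8n\ge 2^{l-1}/8n\ge\eps 2^l$.

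\emph{Condition (C3)} is the main step. Take $C'\subseteq C$ with $|C'|\ge(1-\eps)|C|$. Call a center $x_i$ \emph{good} if $B_q(x_i)\subseteq C'$, and bound the number of bad centers. Each bad center loses at least one point of $C\setminus C'$. Going the other way, each point of $C$ lies in at most $8n$ balls $B_q(x_i)$. So the number of bad centers is at most
\[
8n\,|C\setminus C'|\le 8n\,\eps\,|C|\le 8n\,\eps\,MV(q)=M/4n.
\]
Hence at least $M(1-1/4n)\ge M/2$ centers are good. Every good center satisfies $B_{q+r}(x_i)=B_r(B_q(x_i))\subseteq B_r(C')$. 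Applying bounded multiplicity again gives
\[
|B_r(C')|\ge \frac{(M/2)\,V(q+r)}{8n}=\frac{MV(q+r)}{16n}\ge \frac{|B_r(C)|}{16n}\ge\eps|B_r(C)|.
\]

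I expect the only delicate point to be (C3): we need the $8n$ multiplicity bound for the radius-$q$ balls as well as for the radius-$(q+r)$ balls, and the factors $n$ must fit within $\eps=1/32n^2$. By the estimates above they do, with room to spare. The hypothesis $q\le n/2$ is not actually needed in this argument; it presumably serves later applications.
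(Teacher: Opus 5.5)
There is a genuine gap in your proof of (C3); the construction and (C1), (C2) match the paper. In bounding the bad centers you wrote $8n\,\eps\,MV(q)=M/4n$. The correct value is $MV(q)/4n$, which exceeds $M$ as soon as $V(q)>4n$, so the bound says nothing.

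This is not just an arithmetic slip. Your notion of ``good'' (the whole ball $B_q(x_i)$ survives in $C'$) is too fragile to work. Take $C'=C\setminus\{x_1,\dots,x_M\}$. It removes at most $M$ strings, and $M\le\eps|C|$ whenever $\eps V(q)\ge 8n$, because $|C|\ge MV(q)/8n$. So for $V(q)\ge 256n^3$ this $C'$ is admissible, yet every ball $B_q(x_i)$ has lost its center and no center is good. Theorem~\ref{th:triples}(b) applies the lemma with $q$ up to about $r_k$, where $V(q)\approx 2^k$, so this regime is exactly the one that matters.

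The missing idea is that a ball which loses only a small fraction of its points still has almost its full $r$-neighborhood. This is where Harper's inequality and the hypothesis $q\le n/2$ come in, so your remark that $q\le n/2$ is unnecessary is wrong. The paper replaces ``entirely contained'' with an averaging argument:
\[
\Sigma=\sum_i|B_q(x_i)\setminus C'|\le 8n\,|C\setminus C'|\le 8n\eps MV(q),
\]
so at least $M/2$ balls lose at most $2\Sigma/M\le V(q)/2n$ points. For such a ball, $|B_q(x_i)\cap C'|\ge V(q)(1-\tfrac1{2n})>V(q-1)$. This step uses $V(q)\ge(1+\tfrac1n)V(q-1)$, which needs $q\le n/2$. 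Then \eqref{eq:harper} gives $|B_r(B_q(x_i)\cap C')|\ge V(q+r-1)\ge V(q+r)/(n+1)$. The multiplicity bound for the balls $B_{q+r}(x_i)$ then yields $|B_r(C')|\ge MV(q+r)/(16n(n+1))\ge\eps\,|B_r(C)|$.

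So there is no ``room to spare.'' The choice $\eps=1/32n^2$ is tuned so that the averaging step gives a loss of at most a $1/2n$ fraction per good ball, and so that the extra factor $n+1$ lost in the Harper step is still absorbed at the end.
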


\begin{proof}
Let $x_1,\dots,x_M$ be as in Lemma~\ref{lem:pack} with $s=q+r$: every string is
covered by at most $8n$ of the balls $B_{q+r}(x_i)$; in particular it is covered
by at most $8n$ of the smaller balls $B_{q}(x_i)$. Put
\[
  C\ =\ \bigcup_{i=1}^{M}B_q(x_i),\qquad\text{so}\qquad
  B_r(C)\ =\ \bigcup_{i=1}^{M}B_{q+r}(x_i).
\]

\emph{Sizes.} Counting with multiplicities, $$MV(q)\ge|C|\ge MV(q)/(8n)$$ and
$$MV(q+r)\ge|B_r(C)|\ge MV(q+r)/(8n).$$ 
Here $MV(q)$ is close to $2^k$  by the choice of $M$. Indeed, $MV(q)\le2^k$ and
$$MV(q)\ge2^k-V(q)\ge2^{k-1}$$ (if $V(q)\le2^{k-1}$ this is clear, and otherwise
$M=1$ and $MV(q)=V(q)>2^{k-1}$). With $l:=\lceil\log(MV(q+r))\rceil$,
conditions (C1) and (C2) hold with $\eps\le1/16n$. Indeed
$$
|C|\le MV(q)\le 2^k \text{ and } 
|C|\ge MV(q)/8n\ge2^{k-1}/8n.
$$ 
Besides,  $2^{l-1}<MV(q+r)\le 2^l$ by the choice of $l$. Hence 
$$
|B_r(C)|\le MV(q+r)\le 2^l \text{ and } |B_r(C)|\ge MV(q+r)/8n>2^{l-1}/8n.
$$

\emph{Expansion.} Let $\eps=1/32n^2$ and let $C'\subseteq C$ satisfy
$|C'|\ge(1-\eps)|C|$; we have to bound $|B_r(C')|$ from below.

\emph{Step 1: most of the balls $B_q(x_i)$ lose few elements.} Consider the quantity
\[
  \Sigma\ :=\ \sum_{i=1}^{M}\bigl|B_q(x_i)\setminus C'\bigr| ,
\]
accounting for the total loss of all the balls, where 
each lost string $y$ is counted once for every index $i$ with $y\in B_q(x_i)$
--- so a string lying in several of the balls is counted several times.
 Every string counted here lies in
$C\setminus C'$, because $B_q(x_i)\subseteq C$, and it lies in at most $8n$ of
the balls $B_q(x_i)$. Hence
\[
  \Sigma\ =\ \sum_{y\in C\setminus C'}\#\{i:\ y\in B_q(x_i)\}\ \le\
  8n\,|C\setminus C'|\ \le\ 8n\eps|C|\ \le\ 8n\eps MV(q),
\]
the last step because $|C|\le MV(q)$. Now $\Sigma$ is a sum of $M$ non-negative
terms, so fewer than $M/2$ of them exceed $2\Sigma/M$; consequently for at
least $M/2$ indices $i$ (call them \emph{good})
\[
  |B_q(x_i)\setminus C'|\ \le\ \frac{2\Sigma}{M}\ \le\ 16n\eps V(q)\ =\
  \frac{V(q)}{2n},
\]
the last step because  $\eps=1/32n^2$.

\emph{Step 2: a good ball still contains more than $V(q-1)$ elements.} Since
$|B_q(x_i)|=V(q)$, for a good $i$ we get
\[
  |B_q(x_i)\cap C'|\ =\ V(q)-|B_q(x_i)\setminus C'|\ \ge\
  V(q)\Bigl(1-\frac1{2n}\Bigr)\ >\ V(q-1) .
\]
The last inequality is trivial for $q=0$ (then $V(q-1)=0$), and for $q\ge1$ it
holds because $V(q)\ge(1+1/n)V(q-1)$ by the second inequality
of~\eqref{eq:ratio} --- this is the only place where the hypothesis $q\le n/2$
is used --- so that $V(q-1)\le\frac n{n+1}V(q)$, while
$1-\frac1{2n}>\frac n{n+1}$ for $n\ge2$.

\emph{Step 3: Harper.} Applying~\eqref{eq:harper} to the set
$B_q(x_i)\cap C'$, whose cardinality is at least that of a ball of radius
$q-1$, we get for every good $i$
\[
  \bigl|B_r\bigl(B_q(x_i)\cap C'\bigr)\bigr|\ \ge\ V(q-1+r)\ \ge\
  \frac{V(q+r)}{n+1},
\]
the last step by the first inequality of~\eqref{eq:ratio}, which is valid for
all radii --- note that $q+r$ may well exceed $n/2$ here.

\emph{Step 4: from multiplicities back to the union.} For a good $i$ the set
$B_r(B_q(x_i)\cap C')$ is contained in $B_r(C')$, and also in $B_{q+r}(x_i)$.
Every string belongs to at most $8n$ of the balls $B_{q+r}(x_i)$ and hence to
at most $8n$ of these sets, so the union of the latter, which is a subset of
$B_r(C')$, is at least their total size divided by $8n$:
\[
  |B_r(C')|\ \ge\ \frac1{8n}\sum_{i\ \text{good}}
  \bigl|B_r\bigl(B_q(x_i)\cap C'\bigr)\bigr|\ \ge\
  \frac1{8n}\cdot\frac M2\cdot\frac{V(q+r)}{n+1}\ \ge\
  \frac{MV(q+r)}{32n^2} .
\]
Finally $|B_r(C)|\le MV(q+r)$ and $\eps=1/32n^2$, so
$|B_r(C')|\ge\eps\,|B_r(C)|$. This is (C3). 
\end{proof}

Before proving part~(b), let us say how Lemma~\ref{lem:construct} will be
used. It has one free parameter, the radius $q$ of the balls around the
centers, and it yields a concentrator whose third parameter is
\[
  \lambda(q)\ :=\ \log\bigl(\lfloor2^k/V(q)\rfloor\cdot V(q+r)\bigr) .
\]
For $q=0$ the construction is a code-like family of $2^k$ centers, and
$\lambda(0)=k+\log V(r)$ --- the upper end of the window in
Theorem~\ref{th:triples}(b). As $q$ grows, the number
$M=\lfloor2^k/V(q)\rfloor$ of centers drops, and for the largest admissible
$q$ the construction degenerates into a single ball of radius about $r_k$,
where $\lambda$ is close to $\log V(r_k+r)$ --- the lower end of the window.
Moreover, $\lambda$ changes slowly, by $O(\log n)$ per step. Hence, to realize
a given $l$ inside the window, it suffices to scan $q$ from $0$ upwards and
stop as soon as $\lambda(q)$ comes close to $l$; Proposition~\ref{prop:conc}
then turns the resulting concentrator into a realizable triple. This is what
the proof below does.

\begin{proof}[Proof of Theorem~\ref{th:triples}(b)]
Let 
$$
\log V(r_k+r)\le l\le\min\{n,\,\log V(r_k)+\log V(r)\}.
$$ 
Let $Q$ be the largest $q\le r_k$ with $V(q)\le2^k$; it exists, since $q=0$
qualifies, and $Q\le r_k\le\lfloor n/2\rfloor$ by~\eqref{eq:rkhalf}. For
$q\le Q$ we have $V(q)\le V(Q)\le2^k$, so the function
\[
  q\ \longmapsto\ \lambda(q)\ :=\
  \log\bigl(\lfloor2^k/V(q)\rfloor\cdot V(q+r)\bigr),
  \qquad q=0,1,\dots,Q ,
\]
is well defined, and its consecutive values differ by at most $2\log(n+1)$:
by~\eqref{eq:ratio} the factor $V(q+r)$ changes by at most $n+1$ in either
direction, and the factor $\lfloor2^k/V(q)\rfloor$ by at most $2(n+1)$, since
$\lfloor t\rfloor\ge t/2$ for $t\ge1$.

At the left end, $\lambda(0)=k+\log V(r)$, while
$l\le\log V(r_k)+\log V(r)\le k-1+\log(n+1)+\log V(r)$ by~\eqref{eq:rk}, so
\begin{equation}\label{eq:leftend}
  \lambda(0)\ \ge\ l-\log(n+1).
\end{equation}
At the right end,
\begin{equation}\label{eq:rightend}
  \lambda(Q)\ \le\ l+\log(n+1)\ \le\ n+\log(n+1).
\end{equation}
Indeed, $V(Q+r)\le V(r_k+r)$ because $Q\le r_k$. And $V(Q)\ge2^k/(n+1)$:
this is clear if $Q=r_k$, since $V(r_k)\ge2^{k-1}$, and if $Q<r_k$ then
$V(Q+1)>2^k$ by the maximality of $Q$, whence $V(Q)\ge V(Q+1)/(n+1)$
by~\eqref{eq:ratio}. Therefore
\[
  2^{\lambda(Q)}\ \le\ \frac{2^k}{V(Q)}\,V(Q+r)\ \le\ (n+1)\,V(r_k+r)
  \ \le\ (n+1)\,2^{l},
\]
the last step by the assumed lower bound $l\ge\log V(r_k+r)$. In particular
$2^{\lambda(Q)}\le(n+1)2^n\le2n\cdot2^n$.

So $\lambda$ starts above $l-\log(n+1)$ by~\eqref{eq:leftend} and ends below
$l+\log(n+1)$ by~\eqref{eq:rightend}, while its steps are at most
$2\log(n+1)$; hence $|\lambda(q^*)-l|\le\log(n+1)$ for some $q^*\le Q$, and
then $2^{\lambda(q^*)}\le(n+1)2^{l}\le2n\cdot2^n$.
Therefore Lemma~\ref{lem:construct} applies with $q=q^*$ and provides a
$(k,r,\lceil\lambda(q^*)\rceil,1/32n^2)$-concentrator;
Proposition~\ref{prop:conc} turns it into a realizable triple $(k'',r,l'')$
with $|k''-k|=O(\log n)$ and $|l''-l|=O(\log n)$.
\end{proof}


\section{The extremal profiles}\label{sec:extremal}

We now turn to the profile $g_x$ as a function of $r$. Fix a complexity level $k$ and
consider the class of functions satisfying (P1)--(P4) with $g(0)=k$. The
minimal such function is
\[
  g_{\min}(r)\ =\ \log V(r_k+r)
\]
(up to $O(\log n)$: by (P4) with $r=0$ the function must be at least this, and
$g_{\min}$ itself satisfies (P1)--(P4)), and the maximal one is
\[
  g_{\max}(r)\ =\ \min\{n,\;k+\log V(r)\}
\]
(by (P3) with $r=0$ and (P2)). Both are attained, for all radii simultaneously.

\begin{theorem}\label{th:gmin}
For every $k\le n$ there is a string $x$ of length $n$ with
$\C(x)=k+O(\log n)$ and
\[
  g_x(r)\ =\ \log V(r_k+r)+O(\log n)
  \qquad\text{for all $r$ simultaneously.}
\]
\end{theorem}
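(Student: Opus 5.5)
The lower bound $g_x(r)\ge\log V(r_k+r)-O(\log n)$ holds for \emph{every} string of complexity $k+O(\log n)$. It is exactly the argument given for Theorem~\ref{th:triples}(a), via (P4) with $r=0$ and the comparison of $r_k$ with $r_{k'}$ by log-concavity. So the whole task is to find one $x$ of complexity about $k$ for which the upper bound $g_x(r)\le\log V(r_k+r)+O(\log n)$ holds at all radii $r$ simultaneously.

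The natural candidate is a string in the ball $B_{r_k}(0^n)$ of near-maximal complexity. Take $x\in B_{r_k}(0^n)$ with $\C(x)\ge\log V(r_k)\ge k-1$; it exists by counting. Every element of the ball can be described by $n$, $r_k$ and its index, so $\C(x)\le\log V(r_k)+O(\log n)=k+O(\log n)$ by~\eqref{eq:rk}. Hence $\C(x)=k+O(\log n)$.

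For the upper bound, fix $r$ and let $y\in B_r(x)$. By the triangle inequality $y\in B_{r_k+r}(0^n)$, a set that is decidable given $n$, $r_k$ and $r$. So $y$ is determined by these parameters (all at most $n$, costing $O(\log n)$ bits) together with its index in the ball, which gives
\[
  \C(y)\ \le\ \log V(r_k+r)+O(\log n).
\]
The constant in $O(\log n)$ does not depend on $r$, so taking the maximum over $y\in B_r(x)$ yields $g_x(r)\le\log V(r_k+r)+O(\log n)$ for all $r$ at once. For $r_k+r\ge n$ the bound is simply $n+O(1)$, consistent with (P2).

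Combining the two bounds gives the theorem for every $r$, with a uniform $O(\log n)$ error. I expect no real obstacle. The only point needing care is that the lower bound from part (a) is genuinely uniform in $r$. It is: in that argument the radius $m$ depends only on $k$, and the steps comparing $m$, $r_{k'}$ and $r_k$ lose $O(\log n)$ independently of the shift $a=r$. If we replace $k$ by $\C(x)=k+O(\log n)$, the radius changes only from $r_k$ to $r_{\C(x)}$, and the same log-concavity comparison absorbs this into the $O(\log n)$ term.
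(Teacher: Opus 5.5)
Your proposal is correct and is essentially the paper's proof. It uses the same witness (a string of near-maximal complexity in $B_{r_k}(0^n)$), the same upper bound via $B_r(x)\subseteq B_{r_k+r}(0^n)$, and the same lower bound via (P4) at $r=0$ plus the log-concavity comparison of Theorem~\ref{th:triples}(a), which is indeed uniform in $r$. The one blemish is cosmetic: counting only guarantees $\C(x)\ge\lfloor\log V(r_k)\rfloor$ rather than $\C(x)\ge\log V(r_k)$, but this still gives $\C(x)\ge k-1$.
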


\begin{proof}
Let  $x$ be a string of maximal complexity in $B_{r_k}(0^n)$, so
that $\C(x)\ge\log V({r_k})-O(1)=k-O(\log n)$; also $\C(x)\le\log V({r_k})+O(\log
n)=k+O(\log n)$, since $x$ is determined by ${r_k}$ and its index in $B_{r_k}(0^n)$.

Upper bound on the profile: $B_r(x)\subseteq B_{{r_k}+r}(0^n)$, so every
$y\in B_r(x)$ is determined by ${r_k}+r$ and its index in $B_{{r_k}+r}(0^n)$, whence
$g_x(r)\le\log V({r_k}+r)+O(\log n)$.

Lower bound. As in the proof of Theorem~\ref{th:triples}(a) we may assume
$k\ge2\log n+O(1)$, the bound being trivial for smaller $k$. We have
$g_x(0)=\C(x)\ge\log V({r_k})-O(\log n)$, so by (P4), applied with the
maximal $m$ such that $\log V(m)+2\log n+O(1)\le\C(x)$, we get
$g_x(r)\ge\log V(m+r)$, where $m\ge r_{k'}-1$ for $k'=k-O(\log n)$; and
$\log V(m+r)\ge\log V(r_k+r)-O(\log n)$ by~\eqref{eq:ratio} and
log-concavity~\eqref{eq:logconc}, exactly as in the derivation of
Theorem~\ref{th:triples}(a).
\end{proof}

\begin{theorem}\label{th:gmax}
For every $k\le n$ there is a string $x$ of length $n$ with $\C(x)=k+O(\log n)$
and $g_x(r)=g_{\max}(r)+O(\log n)$ for all $r$ simultaneously.
\end{theorem}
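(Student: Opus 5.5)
The upper bound $g_x(r)\le g_{\max}(r)+O(\log n)$ holds for every $x$ with $\C(x)=k+O(\log n)$, by (P2) and (P3) with $r=0$. So the whole task is to find one $x$ with $\C(x)=k+O(\log n)$ and $g_x(r)\ge\min\{n,k+\log V(r)\}-O(\log n)$ for every $r\le n$ simultaneously. Only $O(\log n)$ bits are available to pay for the radius-dependent structure, so the plan is to build a single set $C$ of about $2^k$ points, computable from $n$ and $k$, that behaves like a concentrator at all radii at once. We then pick $x\in C$ by a counting argument that works for all $n+1$ radii jointly.

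\emph{Construction.} We take $q=0$ in the spirit of Lemma~\ref{lem:construct}, i.e.\ a code-like set of centers. For each $r$, let $M_r=\min\{2^k,\lfloor 2n\cdot 2^n/V(r)\rfloor\}$. We choose $x_1,\dots,x_{2^k}$ uniformly at random and require, for every $r$, that every $y$ lies in at most $8n$ of the balls $B_r(x_i)$ with $i\le M_r$. Since $M_r$ is non-increasing in $r$, these are nested prefixes. For each fixed $r$, the proof of Lemma~\ref{lem:pack} bounds the failure probability by $2^n(e/4)^{8n}<2^{-n}\cdot 2^{n}\cdot 2^{-n}$, so a union bound over the $n+1$ radii still leaves positive probability, if necessary after replacing $8n$ by $16n$. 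Hence the first such tuple in a fixed enumeration is computable from $n,k$. Put $C=\{x_1,\dots,x_{2^k}\}$. Every $x\in C$ has $\C(x)\le k+O(\log n)$.

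\emph{Choosing $x$.} Fix $\eps=1/\mathrm{poly}(n)$. For each $r$, the prefix $C_r=\{x_1,\dots,x_{M_r}\}$ satisfies
\[
|B_r(C_r)|\ge M_rV(r)/8n\approx 2^{\min\{n,\,k+\log V(r)\}}/\mathrm{poly}(n).
\]
Since $|B_r(x_i)|=V(r)$ and multiplicities are at most $8n$, the same estimate holds for any subfamily: if $I\subseteq\{1,\dots,M_r\}$ has $|I|\ge M_r/2$, then $|B_r(\{x_i:i\in I\})|\ge M_rV(r)/16n$. This is condition (C3) with Step 2 of Lemma~\ref{lem:construct} trivialized by $q=0$.

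Call $x_i$ \emph{bad for $r$} if $i\le M_r$ and every $y\in B_r(x_i)$ has $\C(y)<\log(M_rV(r))-c\log n$. The set of indices bad for $r$ is enumerable given $n,k,r$. If it had at least $M_r/2$ elements, the bad balls would cover at least $M_rV(r)/16n$ strings, all of complexity below $\log(M_rV(r))-c\log n$, which is impossible for large $c$. So at most half of the prefix $C_r$ is bad.

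This alone does not give a common good $x$. The key trick is to restrict attention to strings $x_i$ of high conditional complexity. Take $i\le M_n=\min\{2^k,2n\}$? That fails: the prefixes shrink, so $x$ must come from the smallest prefix, which can have size $O(n)$.

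\emph{Main obstacle and fix.} We instead select by complexity. Let $x=x_i$ with $\C(i\mid n,k)\ge k-O(1)$, i.e.\ a random index. Suppose $x$ is bad for some $r$ with $i\le M_r$. Then $i$ is among at most $M_r/2$ enumerable bad indices, so $\C(i)\le\log M_r+O(\log n)$, forcing $M_r\approx 2^k$. For radii with $M_r=2^k$ this contradicts randomness once we use a stronger count: the bad set has size at most $2^{k-c'\log n}$ if the threshold is taken as $\log(M_rV(r))-c''\log n$. We obtain this from the multiplicity bound by sharpening the halving argument to an arbitrary fraction $\delta$, which costs only $\log(1/\delta)$ in the threshold. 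For radii with $M_r<2^k$ we have $i>M_r$ in general, so the target is $n$. Here we argue through a monotone chain instead: let $r_0$ be the largest radius with $M_{r_0}=2^k$. Then $V(r_0)\ge 2^{n-k}/\mathrm{poly}(n)$ by the ratio bound~\eqref{eq:ratio}, so $g_x(r_0)\ge n-O(\log n)$ already by the first case, and (P1) extends this to all larger $r$. I expect making this threshold bookkeeping and the simultaneous multiplicity bound consistent to be the main technical work. The rest follows Proposition~\ref{prop:conc}.
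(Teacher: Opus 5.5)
Your argument is correct, but it reaches the result by a different route from the paper. Stripped of the parts you never use, your proof runs as follows.
\begin{enumerate}
\item[(i)] Take $2^k$ centers as in Lemma~\ref{lem:pack}, with multiplicity at most $8n$ at the radius $r_0$, the largest radius with $2^kV(r_0)\le 2n\cdot 2^n$.
\item[(ii)] Take an index $i$ with $\C(i\mid n,k)\ge k$.
\item[(iii)] For each $r\le r_0$, call an index bad if the whole ball $B_r(x_i)$ consists of strings of complexity below $k+\log V(r)-c\log n$. The set of bad indices is enumerable from $n,k,r$. By the multiplicity count it has fewer than $8n\cdot 2^k n^{-c}$ elements, so it cannot contain $i$.
\item[(iv)] The case $r=0$ gives $\C(x_i)\ge k-O(\log n)$. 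Radii $r>r_0$ follow from (P1), because $k+\log V(r_0)\ge n$ by~\eqref{eq:ratio}.
\end{enumerate}

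The paper's first proof uses the same kind of family (with $s=r^*$) but a different certificate. It takes $x$ of maximal complexity among the centers and a string $y\in B_r(x)$ with $\C(y\mid x)\ge\log V(r)$. The covering-multiplicity (list-decoding) property gives $\C(x\mid y)=O(\log n)$, and symmetry of information then yields $\C(y)\ge k+\log V(r)-O(\log n)$. The paper's route is shorter and involves no threshold tuning. Yours avoids symmetry of information and conditional complexity of $x$ given $y$ altogether, using only enumerability and counting, in the style of (P4) and Proposition~\ref{prop:conc}. In effect, you make the concentrator argument work for all radii at once by replacing ``discard the simple part of $C$'' with ``pick an incompressible index''. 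It also differs from the paper's second, online proof, which needs no precomputed family at all.

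Two simplifications are worth making. First, the radius-dependent prefixes $M_r$ and the union bound over radii are unnecessary. A multiplicity bound for the balls $B_{r_0}(x_i)$ is automatically inherited by all $B_r(x_i)$ with $r\le r_0$, since the balls are nested. Moreover, the prefixes with $M_r<2^k$ never enter your final argument. Second, the abandoned attempt with $i\le M_n$ should simply be deleted.
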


We may assume $k\ge1$: for $k=0$ the string $x=0^n$ works, since
$g_{0^n}(r)=\log V(r)+O(\log n)=g_{\max}(r)+O(\log n)$. Let $r^*$ be the
maximal radius with $k+\log V(r^*)\le n$; then $r^*\le n-1$, as
$k+\log V(n)=k+n>n$. It suffices to
control the profile for $r\le r^*$: for larger $r$, monotonicity and (P2) give
$g_x(r)=n+O(\log n)$ automatically, provided $g_x(r^*)\ge n-O(\log n)$. The
latter does follow from the bound $g_x(r)\ge k+\log V(r)-O(\log n)$ proved
below for $r\le r^*$: by the maximality of $r^*$ we have $k+\log V(r^*+1)>n$,
while $V(r^*+1)\le(n+1)V(r^*)$ by~\eqref{eq:ratio}, so
$k+\log V(r^*)\ge n-\log(n+1)$.

We give two proofs: a short one via Lemma~\ref{lem:pack}, and an
enumeration-based one whose technique may be useful for the open problem of
Section~\ref{sec:open}.

\begin{proof}[First proof]
Apply Lemma~\ref{lem:pack} with $s=r^*$ and $M=\lfloor2^n/V(r^*)\rfloor$; note
$\log M=n-\log V(r^*)+O(1)=k+O(\log n)$. Let $x$ be a string of maximal
complexity among the centers $x_1,\dots,x_M$, so that
$\C(x)=\log M+O(\log n)=k+O(\log n)$ (the whole family can be found by
exhaustive search from the parameters, so each $x_i$ is determined by its index).

Fix $r\le r^*$ and pick $y\in B_r(x)$ with $\C(y\mid x)\ge\log V(r)$ (such $y$
exists by counting). The ball $B_{r^*}(y)$ contains $x$, and by the choice of
the family it contains at most $8n$ of the centers; therefore $\C(x\mid
y)\le2\log n+O(1)$ ($x$ is determined by $y$ and the index of $x$ among the
centers lying in $B_{r^*}(y)$). By symmetry of information,
\begin{multline*}
  \C(y)\ =\ \C(x,y)-\C(x\mid y)+O(\log n)\ =\
  \C(x)+\C(y\mid x)-\C(x\mid y)+O(\log n)\\
  \ge\ k+\log V(r)-O(\log n).
\end{multline*}
So $g_x(r)\ge k+\log V(r)-O(\log n)$; the matching upper bound is (P3) with
$r=0$.
\end{proof}

\begin{proof}[Second proof]
We construct $x$ by a series of attempts. Start with $x=0^n$. Enumerate all
pairs (string, its complexity bound) and maintain, for each threshold $i$, the
set $S_i$ of strings so far discovered to have complexity less than $i$.
Whenever it turns out that for some $r\le r^*$ the current $x$ satisfies
\[
  B_r(x)\ \subseteq\ S_{k+\log V(r)-\log2n}
  \tag{$*$}
\]
(all words in the ball are simple), we discard $x$ and choose a new one by the
following combinatorial lemma, applied to the current sets $S_i$.

\begin{lemma}\label{lem:half}
For any sets $S_i$ with $|S_i|<2^i$ there exists $x$ such that for every $r\le
r^*$ at most half of the ball $B_r(x)$ belongs to $S_{k+\log V(r)-\log2n}$.
\end{lemma}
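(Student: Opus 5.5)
Counting argument: pick $x$ uniformly at random in $\{0,1\}^n$ and bound, for each fixed $r\le r^*$, the probability that more than half of $B_r(x)$ lies in $T_r:=S_{k+\log V(r)-\log 2n}$. Then take a union bound over the at most $n$ values of $r$. We only use $|T_r|<2^{k+\log V(r)-\log2n}=2^kV(r)/(2n)$.

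Fix $r\le r^*$ and double count pairs $(x,y)$ with $y\in B_r(x)\cap T_r$. Each $y\in T_r$ lies in exactly $V(r)$ balls $B_r(x)$, so
\[
  \sum_{x\in\{0,1\}^n}|B_r(x)\cap T_r| \;=\; V(r)\,|T_r| \;<\; \frac{2^kV(r)^2}{2n}.
\]
Call $x$ \emph{bad for $r$} if $|B_r(x)\cap T_r|>V(r)/2$. Markov's inequality gives
\[
  \#\{x:\ x \text{ bad for } r\} \;<\; \frac{2^kV(r)^2/(2n)}{V(r)/2} \;=\; \frac{2^kV(r)}{n} \;\le\; \frac{2^n}{n},
\]
where the last step uses $2^kV(r)\le 2^kV(r^*)\le 2^n$, i.e.\ the defining property $k+\log V(r^*)\le n$ of $r^*$ together with the monotonicity of $V$. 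Summing over $r=0,1,\dots,r^*$, at most $r^*+1\le n$ values since $r^*\le n-1$, the number of strings bad for some $r$ is strictly less than $n\cdot 2^n/n=2^n$. Hence some $x$ is good for every $r\le r^*$, which is what Lemma~\ref{lem:half} asserts.

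The main obstacle is the normalization: the threshold $\log 2n$ has to absorb exactly the factor $2$ from ``half'' and the factor $n$ from the union over radii. The computation above shows it does, with the strictness of $|S_i|<2^i$ giving the strict final inequality. Two boundary points need a check. If the threshold $k+\log V(r)-\log2n$ is not an integer, we read $S_t$ for real $t$ as the set of strings of complexity less than $t$, so that $|S_t|<2^t$ still holds (rounding down only helps). For $r=0$ the claim says $x\notin T_0$, which is the case $V(0)=1$ of the same count.
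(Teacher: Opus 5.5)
Your argument is correct and is essentially the paper's first-moment computation: the paper bounds $\mathbf{E}\,|B_r(x)\cap S_{k+\log V(r)-\log2n}|\le V(r)/2n$ exactly as your double count does, sums the fractions over $r\le r^*$ to get expectation at most $1/2$, and picks an $x$ below the mean. You instead apply Markov at each radius and a union bound over the at most $n$ radii, which is the same estimate organized differently.

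One small correction to your side remark. It concerns the application rather than the lemma, whose hypothesis already grants $|S_i|<2^i$. The number of strings of complexity less than a non-integer $t$ is only bounded by $2^{\lfloor t\rfloor+1}-1$, which can exceed $2^t$. So the right convention is the rounded-down set $S_{\lfloor t\rfloor}$, as your parenthetical suggests, not the set of strings of complexity less than $t$.
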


\begin{proof}
Let $\eps_x(r)$ be the fraction of $B_r(x)$ lying in $S_{k+\log V(r)-\log2n}$.
Each fixed string belongs to $B_r(x)$ for a uniformly random $x$ with
probability $V(r)/2^n$, hence
\[
  \mathbf{E}\,|B_r(x)\cap S_{k+\log V(r)-\log2n}|
  \ =\ |S_{k+\log V(r)-\log2n}|\cdot\frac{V(r)}{2^n}
  \ \le\ \frac{2^kV(r)}{2n}\cdot\frac{V(r)}{2^n}
  \ \le\ \frac{V(r)}{2n},
\]
where the last step uses $2^kV(r)\le2^n$ (valid for $r\le r^*$). So
$\mathbf{E}\,\eps_x(r)\le1/2n$, and
$\mathbf{E}\sum_{r\le r^*}\eps_x(r)\le(r^*+1)/2n\le1/2$, the last step
because $r^*\le n-1$. Some $x$ realizes at most
the expectation, and for that $x$ the sum and hence every single term is at most $1/2$.
\end{proof}

Whenever ($*$) forces a change of $x$ prompted by a radius $r$, at least
$V(r)/2$ \emph{new} strings must have entered $S_{k+\log V(r)-\log2n}$ since
the previous change prompted by the same $r$ (at the moment of the previous
choice, at most half of the then-current ball was simple). Since
$|S_{k+\log V(r)-\log2n}|<2^kV(r)/2n$, the radius $r$ can prompt at most
$2^k/n$ changes, and in total there are at most $(r^*+1)\cdot2^k/n\le2^k$
changes.
The process is deterministic given $n,k,r^*$, and $x$ stabilizes; the final $x$
is determined by the parameters and the total number of changes, whence
$\C(x)\le k+O(\log n)$.

By construction, for the final $x$ and every $r\le r^*$ the ball $B_r(x)$
contains a string of complexity at least $k+\log V(r)-\log2n$; in particular
($r=0$) $\C(x)\ge k-\log2n$. So $\C(x)=k+O(\log n)$ and
$g_x(r)\ge k+\log V(r)-O(\log n)$ for all $r\le r^*$, while the upper bound is
(P3).
\end{proof}

\section{Open questions}\label{sec:open}

The natural remaining question is a characterization of the attainable
profiles: \emph{is every function satisfying (P1)--(P4) equal to $g_x+O(\log
n)$ (or at least $g_x+o(n)$) for some $x$?} Theorem~\ref{th:triples} pins the
profile at one radius, and its proof in fact controls the whole initial
segment: for the string $x$ produced from the concentrator of
Lemma~\ref{lem:construct} one can check that $g_x(r')=\log M+\log
V(q+r')+O(\log n)$ for \emph{all} $r'\le r$, which is a translate of the
minimal-growth curve. The difficulty is to glue different growth rates at
different scales; the enumeration technique of the second proof of
Theorem~\ref{th:gmax} seems to be a natural tool.

A second question is quantitative: our characterization has additive error
$O(\log n)$, and we did not try to optimize the constant in it. Is the error
$O(\log n)$ optimal for the pointwise problem?

\section{The proof of Lemma~\ref{l:growth}}\label{proof:l2}
Equation~\eqref{eq:ratio}. Indeed, $V(s)-V(s-1)=\binom ns\le n\binom n{s-1}\le nV(s-1)$ for every
$s\ge1$ (for $s>n$ both sides vanish), which gives the first inequality. For $s\le n/2$ the binomial coefficients
$\binom ni$ increase for $i\le s$; hence
$V(s-1)=\sum_{i<s}\binom ni\le s\binom ns\le n\binom ns=n(V(s)-V(s-1))$,
which gives the second one.

To check~\eqref{eq:logconc}, write $a_i=\binom ni$ and note
that
\begin{multline*}
  V(s)^2-V(s-1)V(s+1)\ =\
  V(s)\bigl(V(s)-V(s-1)\bigr)-V(s-1)\bigl(V(s+1)-V(s)\bigr)\\
  =\ a_sV(s)-a_{s+1}V(s-1).
\end{multline*}
Now
\[
  a_sV(s)\ =\ a_sa_0+\sum_{i=1}^{s}a_sa_i\ \ge\ \sum_{i=1}^{s}a_{s+1}a_{i-1}
  \ =\ a_{s+1}V(s-1),
\]
where the middle inequality is term-wise: $a_sa_i\ge a_{s+1}a_{i-1}$, because
the ratios $a_j/a_{j+1}$ are non-decreasing in $j$ (log-concavity of the
binomial coefficients) and $i-1<s$.

\end{document}